\newtheorem{thm}{Theorem}[section]
\newtheorem{lem}[thm]{Lemma}
\newtheorem{dfn}{Definition}[section]
\newtheorem{cor}[thm]{Corollary}
\newtheorem{remark}{Remark}[section]
\newcommand{\R}{\mathbb{R}}
\newcommand{\calS}{\mathcal{S}}
\newcommand{\E}{\mathcal{E}}
\newcommand{\V}{\mathcal{V}}
\newcommand{\F}{\mathcal{F}}
\newcommand{\Z}{\mathcal{Z}}
\newcommand{\f}{\boldsymbol{f}}
\newcommand{\g}{\boldsymbol{g}}
\renewcommand{\vec}[1]{\boldsymbol{#1}}
\newcommand{\Iprod}[2]{\langle #1,#2\rangle}
\newcommand{\sgn}{{\rm sgn}}
\begin{document}

\maketitle

\title{Isospectral discrete and quantum graphs with the same flip counts and nodal counts}
\author{Jonas S. Juul$^{1}$ and Christopher H. Joyner$^{2}$}
\address {$^1$Niels Bohr Institute, University of Copenhagen, Blegdamsvej 17, Copenhagen 2100-DK, Denmark}
\address {$^2$School of Mathematical Sciences, Queen Mary University of London, London, E1 4NS, United Kingdom}
\ead{\mailto{jonas.juul@nbi.ku.dk}, \mailto{c.joyner@qmul.ac.uk} }
\date{\today}

\begin{abstract}
The existence of non-isomorphic graphs which share the same Laplace spectrum (to be referred to as isospectral graphs) leads naturally to the following question: What additional information is required in order to resolve isospectral graphs? It was suggested by Band, Shapira and Smilansky that this might be achieved by either counting the number of nodal domains or the number of times the eigenfunctions change sign (the so-called flip count) \cite{band_nodal_2006,Band-2007}. Recently examples of (discrete) isospectral graphs with the same flip count and nodal count have been constructed by K. Ammann by utilising Godsil-McKay switching \cite{Ammann-2015}. Here we provide a simple alternative mechanism that produces systematic examples of both discrete and quantum isospectral graphs with the same flip and nodal counts.
\end{abstract}

\section{Introduction}
One of the traditional means for studying the structure of eigenfunctions of the Laplacian on a large variety of domains is the counting of their nodal domains - the connected subsets where the eigenfunction has a constant sign. The systematic study begins by ordering the spectrum in a monotonically non-decreasing sequence, and associating with the $n^{\rm th}$ eigenfunction the corresponding nodal domain number $\nu_n$. For domains in one dimension, Sturm's oscillation theorem states that $\nu_n=n$. For higher dimensions Courant's theorem provides the upper bound $\nu_n \le n$ \cite{courant_methods_1953}. Studies of nodal domains have yielded many new and surprising insights into various branches of Physics and Mathematics (see e.g. \cite {gnutzmann_complex_2013} for a collection of relevant papers). In particular, it became apparent that the nodal sequence $\{\nu_n\}_{n=1}^{\infty}$ stores information about the domain on which the Laplacian is defined, such as its boundaries or metric, which does not overlap with the information stored in the spectrum \cite{blum_nodal_2002}. Therefore, since the answer to Kac's famous question \cite{kac_can_1966} `can one hear the shape of a drum? was shown to be `no' (examples of pairs of isospectral\footnote{The term `cospectral' is used in place of `isospectral' by some other authors.} planar domains were first obtained by Gordon Webb and Wolpert \cite{Gordon-1992}), this led Gnutzmann, Smilansky and others \cite{gnutzmann_resolving_2005,gnutzmann_can_2006} to ask `can one \emph{count} the shape of a drum?'. More specfically, given two isospectral domains or manifolds, is the knowledge of the nodal count enough to distinguish the two? Br\"{u}ning and Fajman \cite{bruning_nodal_2012} showed this is not the case for certain flat tori, however it seems to be positive in some particular classes of domains \cite{karageorge_counting_2008}.

In the present article we address this question in the context of both discrete and quantum graphs. Studying the Laplacian in these simple, yet non-trivial, systems has proven to be remarkably insightful because many properties are shared with their higher-dimensional counterparts. For instance, it is known that for tree-graphs the Sturm result holds \cite{berkolaiko_lower_2008}, and more generally Courant's theorem also applies in this context \cite{Davies_2001}. In addition, recent results connecting the stability of eigenvalues under small perturbations to nodal quantities were first born out of discoveries in graphs \cite{berkolaiko_stability_2012,Band-2012,berkolaiko_nodal_2013,colin_de_verdiere_magnetic_2013}, before their application to planar domains \cite{Berkolaiko-2012a}. Counting nodal domains in chaotic billiards \cite{blum_nodal_2002,Bogomolny_2002} is also highly related to analogous studies in $d-$regular graphs, which were carried out numerically \cite{charikar_eigenvectors_2007} and analyzed theoretically \cite{elon_eigenvectors_2008} within a random wave model. We also point the reader to the review article \cite{rami_band_nodal_2008} which provides further motivations and developments in the subject.

In \cite{band_nodal_2006,Band-2007} a number of examples of pairs of isospectral quantum graphs were constructed, which were analogous of previous isospectral domains in $\R^2$  \cite{Gordon-1992,Buser-1994,Chapman-1995,Okada-2001,Jakobson-2006} obtained using Sunada's method \cite{Sunada-1985}. In all of the examples the nodal count was able to distinguish these pairs, leading the authors to conjecture that this could resolve isospectrality. Note that in the case of quantum graphs it was shown that knowing the spectrum is enough to distinguish between quantum graphs, provided the lengths of the edges are all incommensurate \cite{Gutkin-2001}. In the context of discrete graphs it was shown that if the so-called \emph{weighted Laplacian} was used then one can find an example of isospectral graphs that are \emph{not} distinguished by the nodal count \cite{oren_isospectral_2012} but the same could not be said of the standard Laplacian. Recently K. Ammann has used the method of Godsil-McKay switching \cite{Godsil_1982} to provide examples of discrete graphs in which both the flip count and nodal count of the Laplacian is the same \cite{Ammann-2015}. Inspired by her results we show here that one can use an alternative simple mechanism for constructing pairs of both discrete and quantum graphs that are isospectral that have the same flip and nodal counts.

The article is presented as follows: In the remainder of the introduction we recount the necessary properties of both discrete graphs, quantum graphs and their respective nodal and flip counts. In Section \ref{Sec: Discrete graphs} we explain a simple mechanism for obtaining isospectral graphs and then go on to show that from this the flip counts and nodal counts will be the same. We also show in Subsection \ref{Sec: Removing isospectrality} that one can find non-isospectral examples of discrete graphs for which the flip and nodal counts coincide. In Section \ref{Sec: Quantum graphs} we adapt the mechanism to the quantum graph setting and give analogous examples. Finally in Section \ref{Sec: Conclusions} we offer some concluding remarks and possible further directions.

\subsection{Discrete graphs}\label{Sec: Discrete setup}

A discrete graph $G=(\mathcal{V},\mathcal{E})$ is given by a set of vertices $\V$ and (undirected) edges $\E$, meaning $(i,j) = (j,i) \in \E$ if the vertices $i$ and $j$ are connected (we also use the notation $i \sim j$ to denote that $i$ is connected to $j$). The number of vertices and edges are denoted $V = |\V|$ and $E = |\E|$ respectively. In the present context (unless otherwise stated) we assume $G$ to be connected and simple, meaning there are no parallel edges or self loops (i.e. edges of the form $e=(i,i)$). The connectivity of $G$ is encoded in the $V\times V$ \emph{adjacency matrix} $A(G)$ whose $(i,j)$ and $(j,i)$ entries are $1$ if $(i,j)\in\E$ and $0$ otherwise. The degree $d_i$ of a vertex $i$ is the number of vertices that are connected to it. We can extract this quantity from $A$ using $d_i=\sum_j A_{ij}$ and in turn use this to construct the diagonal \emph{degree matrix} $D(G)={\rm diag}(d_1,\ldots,d_V)$. Combining these two matrices forms the discrete Laplacian $L(G): \R^V \to \R^V$ given by
\begin{equation}
L(G)=D(G)-A(G).
\end{equation}
The Laplacian is real, symmetric and positive-definite. It therefore has $V$ non-negative eigenvalues $\lambda_1 \leq \lambda_2 \leq \ldots \leq \lambda_V$ and associated real eigenvectors $\f_n = (f_n(i),\cdots,f_n(V))^T$.

In the following we shall investigate the nodal properties. However eigenvectors that possess zeros present difficulties that must be avoided. It is therefore important to restrict the investigation to eigenvectors of the following type.

\begin{dfn}[Genericity in discrete graphs]\label{Dfn: Generic discrete} Let $\f$ be an eigenvector of $L$ with eigenvalue $\lambda$. Then $\f$ is called \textbf{generic} if the eigenvalue $\lambda$ is non-degenerate and $f(i) \neq 0$ for all $i \in \V$.
\end{dfn}

\subsection{Quantum graphs}
Here we give a brief overview of the construction of quantum graphs. One should consult \cite{Berk-Kuch-2013,Gnutzmann_2006} for a more detailed exposition.

A metric graph is obtained from a discrete graph $G = (\V,\E)$ by endowing each edge $e = (i,j) \in \E$ with a finite length $l_e$. In this sense we can allow for a position $x$ on the graph given by selecting an edge $e$ and writing $x_e \in [0,l_e]$ as the distance from the origin vertex $i = o(e)$. We will denote by $\Gamma = (G,l_1,\ldots,l_{E})$ the corresponding metric graph. The establishment of a metric means we may construct functions $\psi(x)$ that are supported on the graph. We do this by restricting the function to each edge and we denote this by $\psi^e(x_e)$, where $x_e \in [0,l_e]$. 

Given this metric graph we may then consider operators acting on appropriate spaces of functions that are supported on the graph. In our present context we shall choose the free one-dimensional Schr\"{o}dinger operator (the negative Laplacian) acting on each edge
\[
L : \psi^e(x_e) \mapsto - \frac{d^2\psi^e}{dx_e^2}(x_e)
\]
and the correct function space (see e.g. \cite{Berk-Kuch-2013} for details) is given by
\[
H^2(\Gamma) := \bigoplus_{e\in \E} H^2([0,l_e]),
\]
where $H^2([0,l_e])$ denotes the Sobolev space on the edge $e$ of real one-dimensional functions $\psi \in L^2([0,l_e])$ whose weak-derivatives up to order two are square integrable.

The final step is to make the operator self-adjoint with respect to the following inner product 
\begin{equation}\label{Eq: Inner product}
\langle \psi, \phi \rangle := \sum_{e \in \E} \int_0^{l_e} \psi(x)\phi(x) dx.
\end{equation}
This requires introducing vertex conditions that match the function and its derivative on each edge emanating from a vertex. A complete classification of conditions has been obtained \cite{Kost-Schr-1999}, however here we will restrict ourselves to so-called Neumann (also referred to as Kirchoff) vertex conditions, given by the following two properties.

\begin{enumerate}
\item \textbf{The function $\psi$ is continuous at vertices:} For any two vertices $v_i,v_j$ such that $v_i,v_j \sim u$ we have
\begin{equation}\label{Eqn: Neumann cond 1}
\psi^{(u,v_i)}(0) = \psi^{(u,v_j)}(0).
\end{equation}
\item \textbf{The sum of outgoing derivatives is zero at vertices:} For vertices $v_1,v_2,\ldots, v_{d_u}$ such that $v_i \sim u$ we have
\begin{equation}\label{Eqn: Neumann cond 2}
\sum_{i=1}^{d_u} \frac{d\psi^{(u,v_i)}}{dx_{(u,v_i)}} (0) = 0.
\end{equation}
\end{enumerate}
The function $\psi(x)$ is an eigenfunction of $\Gamma$ (with eigenvalue $\lambda$) if $L\psi = \lambda \psi$ and the conditions (\ref{Eqn: Neumann cond 1}) and (\ref{Eqn: Neumann cond 2}) are satisfied. Moreover, since we shall only consider compact graphs, the spectrum of the operator $L$ is a countable sequence of non-negative real numbers with no accumulation points. Thus we can number the eigenvalues $\lambda_1 \leq \lambda_2 \leq \ldots$ with corresponding eigenfunctions $\psi_n(x)$, which form a basis for $L^2(\Gamma)$ - the space of square integrable functions with respect to the inner product (\ref{Eq: Inner product}) and satisfying the boundary conditions (\ref{Eqn: Neumann cond 1}) and (\ref{Eqn: Neumann cond 2}). Again, we say the eigenvalue $\lambda$ is simple if it is non-degenerate.

In the quantum graphs setting there is an analogous version on Definition \ref{Dfn: Generic discrete} for generic eigenfunctions, which shall be needed in order to avoid ambiguity over nodal properties.
 
\begin{dfn}[Genericity in quantum graphs]
An eigenfunction $\psi$ of $L_{\Gamma}$ is called \emph{generic} if it is non-zero on all vertices and the corresponding eigenvalue $\lambda_n$ is simple.
\end{dfn}

\subsection{The flip count and nodal count}

The definition of generic eigenvectors (in the discrete case) and generic eigenfunctions (in the quantum case) allows us to proceed, without ambiguity, to defining the flip count and nodal (domain) count in each scenario.\footnote{Note that a number of other authors refer to our definition of the `flip count' as the `nodal (point) count', e.g. \cite{berkolaiko_nodal_2013,band_nodal_2013}. We prefer this terminology in this context to emphasise the distinction between flips and nodal domains.}

\begin{dfn}[Discrete graph flip count]Let $\f$ be a generic eigenvector of $G$. Then 
\[
\F_G(\f) := \{(i,j) \in \E(G) : f_n(i)f_n(j) <0 \}
\]
denotes the set of (undirected) edges on the discrete graph $G$ for which $\f$ changes sign (i.e. flips). $\mu_G(\f) :=  |\F_G(\f)|$ is then the number of flips of $\f$.
\end{dfn}

\begin{dfn}[Discrete graph nodal count]Let $\tilde{G} = (\V(G),\E(G)\setminus \F(\f))$ be the graph formed by removing all edges on which a generic eigenvector $\f$ changes sign. The number of nodal domains of $\f$ on $G$ is denoted $\nu_G(\f)$ and defined to be the number of connected components of $\tilde{G}$.
\end{dfn}

In the case of quantum graphs we have a completely analogous formalism. Note that due to the vertex conditions (\ref{Eqn: Neumann cond 1}) and (\ref{Eqn: Neumann cond 2}) we are guaranteed that the zeros of generic eigenfunctions $\psi_n$ are points on the edges of the graphs.

\begin{dfn}[Quantum graph flip count]
Let $\psi$ be a generic eigenfunction of $L_\Gamma$ and $\Z_{\Gamma}(\psi) := \{x \in \Gamma : \psi(x) = 0\}$ the associated zero set. Then the flip count is the number of times the eigenfunction changes sign on edges, or
\begin{equation}
\mu_{\Gamma}(\psi) =  \sum_{e \in \E} \int_0^{l_e} \delta_{\Z_{\Gamma}(\psi)}(x) dx .
\end{equation}
\end{dfn}
\begin{dfn}[Quantum graph nodal count]
Let $\Gamma \setminus \Z_{\Gamma}(\psi)$ be the graph obtained by removing the zero set of $\psi$. Then the number of nodal domains $\nu_{\Gamma}(\psi)$ is the number of connected components of $\Gamma \setminus \Z_{\Gamma}(\psi)$.
\end{dfn}

In both discrete and quantum graphs the number of nodal domain $\nu_n = \nu_G(\f_n)$ (resp. $\nu_n = \nu_{\Gamma}(\psi_n)$) of the $n^{\rm th}$ eigenvector (resp. eigenfunction) are known to obey certain general bounds. An upper bound is given by Courant's theorem $\nu_n \leq n$, which was translated from the setting of domains to discrete graphs in \cite{Davies_2001} and to quantum graphs in \cite{Gnut-Smi-Web-2003}. A lower bound was established for both discrete and quantum graphs by Berkolaiko \cite{berkolaiko_lower_2008}, so that for generic eigenvectors we have
\begin{equation}
\label{Nodal bounds}
n-\beta \le \nu_n\le n\ .
\end{equation}
Here $\beta = E - V +1$ is the first Betti number, which counts the number of fundamental cycles, or the minimum number of edges one must remove in order for the graph to become a tree. For the flip count we have
\begin{equation}
\label{Flip bounds}
n-1\le \mu_n\le n-1 +\beta \ .
\end{equation}
The lower bound was also proved in \cite{berkolaiko_lower_2008}, the upper bound follows from the fact that $\mu_n \leq \nu_n - 1 + \beta$. The two bounds imply immediately that for trees (\ref{Nodal bounds}) and (\ref{Flip bounds}) reduce to the Sturm oscillation relation $\nu_n=\mu_n+1=n$, which means tree graphs cannot be distinguished via there flip or nodal counts. It has also been proved, conversely, that all graphs with flip counts that follow the integer sequence from $1$ to $V$ must be trees \cite{band_nodal_2013}. However, interestingly, the analogous statement about the nodal count remains illusive.

We would also like to highlight that, although the flip count and nodal count are obviously related (see \cite{rami_band_nodal_2008} for further discussion on this relation), the two contain different types of information. The flip count is \emph{local}, as the sign changes of eigenvectors or eigenfunctions occur across small distances, however the nodal count is truly a \emph{global} quantity, as nodal domains can stretch across significant proportions of the graph. For this reason the flip count is often a much easier quantity to obtain, both numerically and analytically \cite{rami_band_nodal_2008}, than the nodal count.

\section{Discrete graphs}\label{Sec: Discrete graphs}

In this section we present a mechanism for constructing both isospectral and non-isospectral pairs of graphs with the same flip count and nodal count. It is based upon the idea of inserting edges between dangling bonds, or leaves, of a graph. To begin we introduce the following definition.
\begin{dfn} A \textbf{leaf} is a vertex $i$ such that $d_i = 1$. A \textbf{$\boldsymbol{k}$-leaf} is a set of $k$ connected vertices such that $k-1$ have degree 2 and one has degree 1. A \textbf{$k$-leaf-pair} consists of two $k$-leaves joined together at a root vertex (see e.g. Figure \ref{Fig: Leaves})
\end{dfn}
In other words, a $k$-leaf is a line graph of length $k$ in which one end is connected to some base graph.
	
\begin{figure}[t]
\centerline{\includegraphics[width=0.45\textwidth]{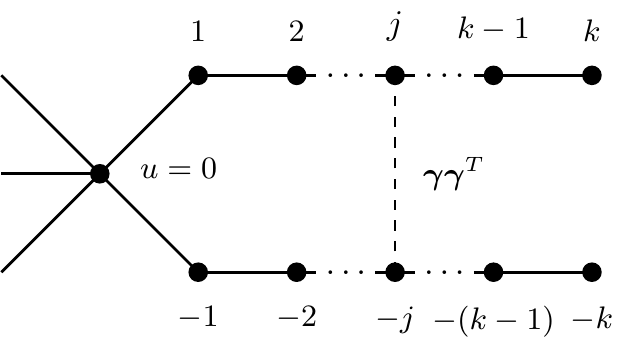}}
\caption{A graph $G$ containing a $k$-leaf-pair connected to some root vertex $u=0$, the graph $\bar{G}$ is formed by inserting the edge $(j,-j)$ which can be achieved via the rank one matrix $[\vec{\gamma} \vec{\gamma}^T]_{ik} = (\delta_{i,j} - \delta_{i,-j})(\delta_{k,j} - \delta_{k,-j})$.} \label{Fig: Leaves}
\end{figure}

The basic premise of this mechanism will be that if we have $k$-leaf-pair connected to the rest of the graph at some root vertex, $u$ say, then there exists eigenvectors which are only non-zero on these leaves. If an edge is inserted appropriately, only these eigenvectors will change, whereas the rest will be left unaltered with the same eigenvalues.

\begin{lem}\label{Lem: Leaf insertion}
Let $G_1$ and $G_2$ be two graphs, which each contain a $k$-leaf-pair. Let $\bar{G}_1$ and $\bar{G}_2$ be the respective graphs obtained by inserting an edge connecting the $j^{\rm th}$ vertex from the root in each $k$-leaf. Then $\sigma(G_1) = \sigma(G_2) \Leftrightarrow \sigma(\bar{G}_1) = \sigma(\bar{G}_2)$.
\end{lem}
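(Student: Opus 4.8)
The plan is to recognise the edge insertion as a rank-one perturbation of the Laplacian that is confined to a subspace determined entirely by the leaf-pair, and then to separate the spectrum into a part that is universal (the same for any base graph) and a part that is left untouched by the insertion. Writing $\vec\gamma = \vec{e}_j - \vec{e}_{-j}$ for the vector that is $+1$ on the $j^{\rm th}$ vertex of one leaf, $-1$ on the $j^{\rm th}$ vertex of the other, and zero elsewhere, a direct inspection of how the degree and adjacency matrices change shows that inserting the edge $(j,-j)$ gives
\begin{equation}
L(\bar G) = L(G) + \vec\gamma\vec\gamma^T,
\end{equation}
since the new edge adds $1$ to the diagonal entries at $j$ and $-j$ and $-1$ to the off-diagonal entries $(j,-j)$ and $(-j,j)$. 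This is precisely the rank-one matrix displayed in Figure \ref{Fig: Leaves}.

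Next I would exploit the reflection symmetry of the $k$-leaf-pair. Let $P$ be the permutation matrix that swaps the two leaves vertex-by-vertex ($m \leftrightarrow -m$) and fixes the root together with every vertex of the base graph. Because the two leaves are identical paths attached to the common root, $P$ is a graph automorphism of both $G$ and $\bar G$, so $L(G)$ and $L(\bar G)$ each commute with $P$ and $\R^V$ splits as an orthogonal direct sum of the symmetric ($Pv = v$) and antisymmetric ($Pv = -v$) subspaces, both invariant under the two Laplacians. The crucial point is that $\vec\gamma$ is antisymmetric, so for every symmetric vector $v$ one has $\vec\gamma^T v = v(j) - v(-j) = 0$; hence $\vec\gamma\vec\gamma^T$ annihilates the symmetric subspace and the symmetric block of $L(\bar G)$ equals that of $L(G)$. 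Writing $\sigma^{\rm sym}$ and $\sigma^{\rm anti}$ for the eigenvalue multisets on the two subspaces, so that $\sigma(G) = \sigma^{\rm sym}(G) \uplus \sigma^{\rm anti}(G)$, this yields $\sigma^{\rm sym}(\bar G) = \sigma^{\rm sym}(G)$.

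I would then identify the antisymmetric block explicitly. An antisymmetric vector vanishes on the root and on the entire base graph and is determined by its values on a single leaf $\{1,\dots,k\}$; restricting the eigenvalue equation to these vertices, with the forced value $v(0)=0$ supplying a Dirichlet condition at the root end and the degree-one tip supplying a Neumann condition, shows that $L(G)$ restricted to the antisymmetric subspace is a path Laplacian $L_k^{DN}$ that depends only on $k$ and not on the base graph. Likewise the restricted perturbation $\vec\gamma\vec\gamma^T$ is a rank-one term seated at path-position $j$, so the antisymmetric block of $L(\bar G)$ depends only on $k$ and $j$. Since $G_1$ and $G_2$ carry $k$-leaf-pairs with the \emph{same} $k$, their antisymmetric spectra coincide, $\sigma^{\rm anti}(G_1)=\sigma^{\rm anti}(G_2)=:\sigma^{\rm anti}_k$, and for the same reason $\sigma^{\rm anti}(\bar G_1)=\sigma^{\rm anti}(\bar G_2)=:\sigma^{\rm anti}_{k,j}$.

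Finally the equivalence would follow by multiset cancellation. Combining the decomposition with the two facts just established,
\begin{equation}
\sigma(G_i) = \sigma^{\rm sym}(G_i) \uplus \sigma^{\rm anti}_k, \qquad \sigma(\bar G_i) = \sigma^{\rm sym}(G_i) \uplus \sigma^{\rm anti}_{k,j} \qquad (i=1,2),
\end{equation}
and cancelling the common antisymmetric summand shows that each of $\sigma(G_1)=\sigma(G_2)$ and $\sigma(\bar G_1)=\sigma(\bar G_2)$ is equivalent to $\sigma^{\rm sym}(G_1)=\sigma^{\rm sym}(G_2)$, hence to one another. The step I expect to demand the most care is verifying that the antisymmetric block is genuinely independent of the base graph: one must check the eigenvalue equation at the root and at its base neighbours to confirm that antisymmetric eigenvectors really are supported only on the leaves and that the effective boundary condition at the root is Dirichlet, so that the restricted operator is the universal path Laplacian $L_k^{DN}$ regardless of how the leaf-pair is attached.
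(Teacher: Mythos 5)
Your proposal is correct and follows essentially the same route as the paper's proof: the rank-one identity $L(\bar G)=L(G)+\vec{\gamma}\vec{\gamma}^T$, the leaf-swap symmetry with even/odd decomposition, invariance of the even (symmetric) spectrum under the edge insertion, and universality of the odd (antisymmetric) spectrum across the two graphs. The only cosmetic differences are that you identify the antisymmetric block explicitly as a Dirichlet--Neumann path Laplacian depending only on $k$ (and $j$), whereas the paper transfers odd eigenvectors directly between $G_1$ and $G_2$, and that you make the final multiset cancellation explicit where the paper leaves it implicit.
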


\begin{proof}For convenience, let us number the vertices in $G_1$ and $\bar{G}_1$ in the following manner $\V := \{-k,-(k-1),\ldots,-1,0,1,\ldots,V - k-1\}$ and denote the subset $\V_{\rm leaves} := \{-k,\ldots,k\}$ such that the two $k$-leaf ends are $k$ and $-k$ and are joined to the rest of the graph at a root vertex $u=0$, as illustrated in Figure \ref{Fig: Leaves}. Suppose we insert the edge at the point $(j,-j)$ for some $j \in \V_{\rm leaves}$ then we can introduce the vector $\vec{\gamma}$ which takes the values $\gamma(i) = \delta_{i,j} - \delta_{i,-j}$,
\[
L(\bar{G}_1) = L(G_1) + \vec{\gamma} \vec{\gamma}^T.
\]
Now, let $\pi : \R^V \to \R^V$ be the operator that exchanges the two leaves, whilst the rest of the graph remains invariant, i.e.
\begin{equation}\label{Eqn: Symmetry action}
[\pi \f](i) = 
\Bigg\{\begin{array}{cl}
f(-i) & i \in \V_{{\rm leaves}} \\
f(i) & i \in \V \setminus  \V_{{\rm leaves}}.
\end{array} 
\end{equation}
Since $\pi^2 = I_V$ we can decompose our space into those vectors which are even $(+)$ and odd $(-)$ under this transformation,
\begin{equation}\label{Eqn: Symmetry decomposition}
\calS_{\pm} := \{\f : \pi \f = \pm \f \}
\end{equation}
and $\R^V \cong \calS_{+} \oplus \calS_{-}$. Furthermore, if $\f \in \calS_{+}$ then $\langle \vec{\gamma},\f\rangle := \vec{\gamma}^T\f = 0$ and thus, if it is an eigenvector of $L(G_1)$, we have
\begin{equation}\label{Eq: Same eigenvalue}
L(\bar{G}_1)\f = L(G_1)\f + \vec{\gamma} \vec{\gamma}^T\f = \lambda\f,
\end{equation}
meaning it is also an eigenvector of $L(\bar{G}_1)$ with the same eigenvalue.

Now, since $[\pi, L(G_1)] = [\pi,L(\bar{G}_1)] = 0$ (both Laplacians are symmetric under the action of $\pi$) we can choose a basis in which all eigenvectors $\f_n$  are either even or odd. The corresponding spectra we denote by $\sigma^{\pm}(G_1)$ such that $\sigma(G_1) = \sigma^{+}(G_1) \cup \sigma^{-}(G_1)$ and similarly for $\bar{G}_1$. All the above holds similarly for $L(G_2)$ and $L(\bar{G}_2)$. Hence, since $\sigma^{+}(G_1) = \sigma^{+}(\bar{G}_1)$ and $\sigma^{+}(G_2) = \sigma^{+}(\bar{G}_2)$, we have
\begin{equation}
\sigma^+(G_1) = \sigma^+(G_2) \Leftrightarrow \sigma^+(\bar{G}_1) = \sigma^+(\bar{G}_2),
\end{equation}

It thus remains to show that $\sigma^-(G_1) = \sigma^-(G_2) \Leftrightarrow \sigma^-(\bar{G}_1) = \sigma^-(\bar{G}_2)$. If $\f \in \calS_{-}$ then the action (\ref{Eqn: Symmetry action}) and decomposition (\ref{Eqn: Symmetry decomposition}) imply that for all $i \in \{0\} \cup  (\V \setminus  \V_{\rm leaves})$ we have $[\pi f](i) = f(i) = -f(i)$, i.e. $f(i)=0$. Therefore if $\f \in \calS_{-}$ (it is only non-zero on $\V_{\rm leaves}$) is an eigenvector of $L(G_1)$ with eigenvalue $\lambda$ it is also an eigenvector of $L(G_2)$ with the same eigenvalue and hence $\sigma^{-}(G_1) = \sigma^{-}(G_2)$. The same reasoning gives us that $\sigma^{-}(\bar{G}_1) = \sigma^{-}(\bar{G}_2)$, which completes the result. However note that $\sigma^{-}(G_1) \neq \sigma^{-}(\bar{G}_1)$, in contrast to the even part of the spectrum.
\end{proof}

\begin{thm}\label{Thm: Leaves flip count}
Let $G_1$ and $G_2$ be two graphs satisfying $\sigma(G_1) = \sigma(G_2)$ in which each graph contains a $k$-leaf-pair. Let us denote the respective eigenvectors of $G_1$ and $G_2$ by $\f_n$ and $\g_n$ and suppose that $\mu_{G_1}(\f_n) = \mu_{G_2}(\g_n)$ and $\nu_{G_1}(\f_n)=\nu_{G_2}(\g_n)$ for all generic $n$. If $\bar{G}_1$ and $\bar{G}_2$ are the graphs,  with associated eigenvectors $\bar{\f}_n$ and $\bar{\g}_n$, obtained by inserting an edge between the two corresponding vertices in each $k$-leaf. Then
\begin{enumerate}
\item [(i)] $\sigma(\bar{G}_1) = \sigma(\bar{G}_2)$ (they are isospectral)
\item [(ii)] For all generic eigenvectors $\mu_{\bar{G}_1}(\bar{\f}_n) = \mu_{\bar{G}_2}(\bar{\g}_n)$ (they have the same flip count)
\item [(iii)] For all generic eigenvectors $\nu_{\bar{G}_1}(\bar{\f}_n) = \nu_{\bar{G}_2}(\bar{\g}_n)$ (they have the same nodal count)
\end{enumerate}
\end{thm}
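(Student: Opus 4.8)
The plan is to reduce everything to the even eigenvectors, which are the only generic ones and which are left untouched by the edge insertion. Part (i) is immediate: since $\sigma(G_1)=\sigma(G_2)$, Lemma~\ref{Lem: Leaf insertion} gives $\sigma(\bar G_1)=\sigma(\bar G_2)$. For (ii) and (iii) I would first record two consequences of the proof of Lemma~\ref{Lem: Leaf insertion}. Any vector in $\calS_-$ vanishes at the root $0$ and on all of $\V\setminus\V_{\rm leaves}$, so it is never generic; hence every generic eigenvector of $\bar G_1$ or $\bar G_2$ lies in $\calS_+$. And a vector in $\calS_+$ satisfies $\langle\vec\gamma,\f\rangle=0$, so by (\ref{Eq: Same eigenvalue}) an even eigenvector of $\bar G_1$ is an eigenvector of $G_1$ with the same eigenvalue and the same entries. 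Thus each generic $\bar\f_n$ coincides entrywise with an even eigenvector $\f_m$ of $G_1$, and each generic $\bar\g_n$ with an even $\g_m$ of $G_2$.

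Next I would fix the index matching. Because $\sigma^+(G_1)=\sigma^+(G_2)=\sigma^+(\bar G_1)=\sigma^+(\bar G_2)$ and $\sigma^-(\bar G_1)=\sigma^-(\bar G_2)$, the graphs $\bar G_1$ and $\bar G_2$ carry identical even and odd spectra; so $\bar\lambda_n$ has the same parity type in both, and a generic (hence even and simple) $\bar\lambda_n$ realises an even eigenvalue sitting at one and the same position $m$ in the common spectrum $\sigma(G_1)=\sigma(G_2)$. This identifies $\bar\f_n\leftrightarrow\f_m$ and $\bar\g_n\leftrightarrow\g_m$ with the same $m$. For the flip count, $\bar\f_n=\f_m$ entrywise means the flips on the edges of $G_1$ agree; the only new edge of $\bar G_1$ is $(j,-j)$, and evenness gives $\bar\f_n(j)\bar\f_n(-j)=\bar\f_n(j)^2>0$, so it never flips. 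Hence $\mu_{\bar G_1}(\bar\f_n)=\mu_{G_1}(\f_m)$, and likewise $\mu_{\bar G_2}(\bar\g_n)=\mu_{G_2}(\g_m)$; the hypothesis $\mu_{G_1}(\f_m)=\mu_{G_2}(\g_m)$ then settles (ii).

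The nodal count is the subtle part, because it is global and genuinely can change under the insertion. Since the new edge is not a flip it belongs to the flip-removed graph, so $\tilde{\bar G}_1=\tilde G_1\cup\{(j,-j)\}$ and $\nu_{\bar G_1}(\bar\f_n)$ equals $\nu_{G_1}(\f_m)$ unless $(j,-j)$ joins two previously distinct nodal domains, in which case it drops by one. As each leaf meets the remainder of the graph only at $0$, the only path between $j$ and $-j$ runs through $0$, so $j$ and $-j$ lie in the same nodal domain of $G_1$ precisely when there is no flip along the leaf path from $0$ to $j$ (the mirror condition at $-j$ holding automatically by evenness). The key point I would establish is that this condition depends on $\lambda_m$ alone: writing the eigenvalue equation along the leaf and solving inward from the degree-one tip produces a recurrence that fixes the ratios $f(i)/f(k)$, and hence the sign pattern on $\{0,\dots,k\}$, as a function of the eigenvalue only. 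Since $\f_m$ and $\g_m$ share $\lambda_m$ they share this leaf profile, so $(j,-j)$ merges domains for $\bar G_1$ exactly when it does for $\bar G_2$. Therefore $\nu_{\bar G_1}(\bar\f_n)-\nu_{G_1}(\f_m)=\nu_{\bar G_2}(\bar\g_n)-\nu_{G_2}(\g_m)$, and the hypothesis $\nu_{G_1}(\f_m)=\nu_{G_2}(\g_m)$ gives (iii).

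The main obstacle I anticipate is exactly this last step for the nodal count: one must convert a global quantity into the purely local, eigenvalue-determined question of a single sign change on the leaf, and then argue the resulting change is identical on the two sides. A secondary technical point is the genericity bookkeeping under the reordering of the spectrum — one should check that an even eigenvalue which is simple in $\bar G_1$ is matched to the correct index in $G_1$, taking care of the case where it happened to coincide with an odd eigenvalue before the edge was inserted.
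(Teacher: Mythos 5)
Your proposal is correct and takes essentially the same route as the paper: the even/odd decomposition under the leaf-swap $\pi$ forcing all generic eigenvectors into $\calS_+$ (where the rank-one perturbation $\vec{\gamma}\vec{\gamma}^T$ acts trivially), the observation that the inserted edge never flips since $\bar{\f}_n(j)=\bar{\f}_n(-j)$, and the reduction of the nodal-count change to a sign pattern along the leaf that is determined by the eigenvalue alone --- your ``merges domains or not'' indicator is exactly the paper's $\chi(\lambda,j)\in\{0,-1\}$ obtained from the recursion $f(i)=F_i(\lambda)f(0)$. The only differences are cosmetic: you are somewhat more explicit about the index matching between $G_i$ and $\bar{G}_i$, and you flag (without resolving) the genericity bookkeeping when an even eigenvalue collides with an odd one, a point the paper passes over silently as well.
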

\begin{proof}Part (i) is simply restating the result of Lemma \ref{Lem: Leaf insertion}. 

To show Part (ii), using the reasoning from the proof of Lemma \ref{Lem: Leaf insertion}, we know that if $\f$ is a generic eigenvector of $L(G_1)$ with eigenvalue $\lambda$ then it must belong to $\f \in \calS_{+}$ and, hence, is also an eigenvector of $L(\bar{G}_1)$ with the same eigenvalue. In addition, if $(j,-j)=\E(\bar{G}_1) \setminus \E(G_1)$ is the edge inserted to make $\bar{G}_1$ then $f(j) = f(-j)$. Therefore we have $\mu_{G_1}(\f) = \mu_{\bar{G}_1}(\f)$ since $\f$ does not change sign across the edge $(j,-j)$. The same argument applies to give $\mu_{G_2}(\g) = \mu_{\bar{G}_2}(\g)$. Finally, by the isospectrality result of Lemma \ref{Lem: Leaf insertion}, we know that if $\f$ and $\g$ are the $m^{\rm th}$ eigenvectors in $G_1$ and $G_2$ then they must occupy the same position in the spectrum, say the $n^{\rm th}$, in $\bar{G}_1$ and $\bar{G}_2$ and so $\mu_{\bar{G}_1}(\bar{\f}_n) = \mu_{G_1}(\f_m) = \mu_{G_2}(\g_m) = \mu_{\bar{G}_2}(\bar{\g}_n)$ for all generic $n$.

In order to prove Part (iii) we require more information about the eigenvectors than their value at the vertices $j$ and $-j$. If $k$ denotes the endpoint of the $k$-leaf, then by the eigenvalue equation $L(G_1) \f = \lambda \f$ we have
\begin{equation}\label{Eqn: Eigenvalue eqn leaf}
\lambda f(k) = f(k) - f(k-1) \ \  \Rightarrow \ \ f(k) = \frac{1}{(1- \lambda)} f(k-1).
\end{equation}
Thus $f(k)$ depends only on the eigenvalue $\lambda$ and the value of $\f$ at the vertex preceding it. Similarly $\lambda f(k-1) = 2f(k-1) - f(k) - f(k-2)$, which, using (\ref{Eqn: Eigenvalue eqn leaf}) gives
\[
f(k-1)  = \left((2-\lambda) - \frac{1}{(1-\lambda)}\right)^{-1}f(k-2),
\]
so, again, we see that $f(k-1)$ can be constructed by simply knowing $\lambda$ and the preceding value $f(k-2)$. The recursive nature therefore implies that on the leaves the vector takes the form $f(i) = F_i(\lambda)f(0)$ for some function $F_i(\lambda)$ (it's exact nature is not important here, only that it depends on $\lambda$).

By inserting the edge $(j,-j)$ we therefore find $\nu_{\bar{G}_1}(\f) = \nu_{G_1}(\f) + \chi(\lambda,j)$, where $\chi(\lambda,j)$ (which only depends on $\lambda$ and the inserted edge $(j,-j)$) is either 0 (the vertices $j$ and $-j$ belong to the same nodal domain of $\f$), or $-1$ (the vertices $j$ and $-j$ belong to different nodal domains of $\f$). Applying the same arguments to a generic eigenvector $\g$ of $G_2$ with the same eigenvalue $\lambda$ gives $\nu_{\bar{G}_2}(\f) = \nu_{G_2}(\f) + \chi(\lambda,j)$ and so, by the isospectrality condition and the equality of nodal domains in $G_1$ and $G_2$ we find $\nu_{\bar{G}_2}(\bar{\f}_n) = \nu_{G_1}(\f_m) + \chi(\lambda,j) = \nu_{G_2}(\g_m) + \chi(\lambda,j) = \nu_{\bar{G}_2}(\bar{\g}_n)$ for all generic $n$.
\end{proof}

The main issue to highlight is that, although all the generic eigenvectors themselves do not change between $G$ and $\bar{G}$, some of their positions in the spectrum will. Therefore the isospectrality condition ensures that they will change in the same manner in both $G$ and $\bar{G}$, ensuring the flip count $\mu_n$ of the associated $n^{{\rm th}}$ eigenvectors is the same.

Importantly, we do not preclude in Theorem \ref{Thm: Leaves flip count} that the graphs $G_1$ are $G_2$ are isomorphic. In fact the easiest way to obtain examples to illustrate this theorem comes from taking $G_1 = G_2$ but containing multiple sets of $k$-leaf pairs. Adding edges to different pairs then makes it possible to create a $\bar{G}_1$ and $\bar{G}_2$ that are non-isomorphic, but are still isospectral, which is how the isospectral pair in Figure \ref{fig:overlap} are created.

\begin{figure}[t]
\centering
\includegraphics[width=0.75\textwidth]{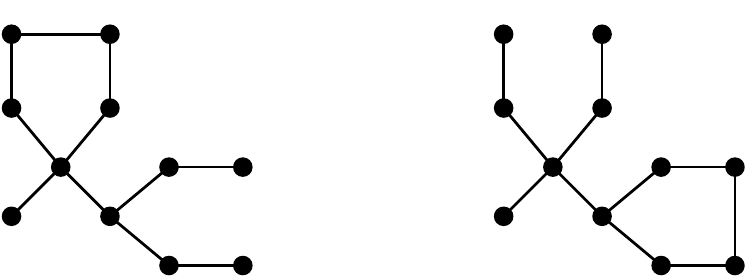}
\caption{A pair of isospectral graphs with the same flip count and nodal count which originate by adding edges to trees.}
\label{fig:overlap}
\end{figure}

\subsection{Removing the isospectrality condition}\label{Sec: Removing isospectrality}

So far we have presented examples of mechanisms which are able to create families of isospectral graphs with the same flip count and nodal count. One may ask whether this condition is always needed. In the following we show this is not the case and present a way of constructing non-isospectral graphs with the same flip count and also the same nodal count. This is born out of Lemma \ref{Lem: Leaf insertion} and the observation that all trees have a flip count $\mu_n = n-1$ and nodal count $\nu_n = n$.

\begin{thm}\label{thm: Edge insertion 2}
Let $G$ be a graph containing a $k$-leaf-pair of length $k=1$ and $\bar{G}$ the graph obtained by inserting an edge between the ends of these leaves. If we denote $\f_n$ to be the $n^{\rm th}$ eigenvector of $G$ then the flip count of the $n^{\rm th}$ eigenvector $\bar{\f}_n$ of $\bar{G}$ is given (provided $\f_n$ is generic) by
\begin{equation}\label{Leaf flip count}
\mu_{\bar{G}}(\bar{\f}_{n}) = \left\{ \begin{array}{ll}
\mu_{G}(\f_{n+1}) &  {\rm if} \;  1<\lambda_n<3 \\
\mu_{G}(\f_{n}) & {\rm otherwise}
\end{array}\right.
\end{equation}
and the nodal count by
\begin{equation}\label{Leaf nodal count}
\nu_{\bar{G}}(\bar{\f}_n) = \left\{ \begin{array}{ll}
\nu_G(\f_n) & {\rm if} \; \lambda_n <1 \\
\nu_G(\f_{n+1}) -1 &  {\rm if} \;  1<\lambda_n<3 \\
\nu_G(\f_n) -1 & {\rm if} \;  \lambda_n > 3 \ .
\end{array}\right.
\end{equation}
\end{thm}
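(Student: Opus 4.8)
The plan is to reduce everything to the even/odd decomposition already established in the proof of Lemma~\ref{Lem: Leaf insertion} and then track carefully how the ordering of the spectrum is disturbed. First I would fix the $k=1$ case explicitly: label the two leaves $1$ and $-1$ and the root $u=0$, so that $L(\bar{G})=L(G)+\vec{\gamma}\vec{\gamma}^T$ with $\gamma(i)=\delta_{i,1}-\delta_{i,-1}$. As in the lemma, decompose $\R^V\cong\calS_+\oplus\calS_-$; the even eigenvectors are common to $G$ and $\bar{G}$ with identical eigenvalues, and every generic eigenvector is even, since the odd sector is supported only on $\{-1,0,1\}$ and vanishes at the root. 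The one genuine difference between the spectra is the single odd eigenvalue: solving $L\f=\lambda\f$ on the antisymmetric vector supported on $\{1,-1\}$ gives $\lambda=1$ in $G$ (where $1$ is a true leaf) and $\lambda=3$ in $\bar{G}$ (where $1$ and $-1$ now have degree two). Hence passing from $G$ to $\bar{G}$ merely deletes the eigenvalue $1$ and inserts $3$, leaving all even eigenpairs, and their eigenvectors, untouched.

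For the flip count I would observe that for any shared even eigenvector $\f$ we have $f(1)=f(-1)$, so the inserted edge $(1,-1)$ is never a sign change; the flip count of the vector itself is therefore unaffected, $\mu_{\bar{G}}(\f)=\mu_G(\f)$. The entire effect is thus a relabelling: an even eigenvector of eigenvalue $\lambda$ occupies index $m+[\lambda>1]$ in $G$ and index $m+[\lambda>3]$ in $\bar{G}$, where $m$ is its rank among the even eigenvalues. The index therefore jumps by exactly $[\lambda>1]-[\lambda>3]$, which equals $+1$ precisely on the window $1<\lambda<3$ and $0$ otherwise. Writing $\lambda_n$ for the eigenvalue of $\bar{\f}_n$ and matching the position of $\bar{\f}_n$ in $\bar{G}$ to the position of the same vector in $G$ then yields the stated flip-count formula.

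For the nodal count I would first pin down the sign of the leaves relative to the root. The leaf recursion from the proof of Theorem~\ref{Thm: Leaves flip count} gives, for $k=1$, $f(\pm1)=f(0)/(1-\lambda)$, so the leaves share the sign of the root when $\lambda<1$ and take the opposite sign when $\lambda>1$; note the thresholds $\lambda=1,3$ are exactly the excluded non-generic odd eigenvalues. I would then carry out the local surgery on the three vertices $0,\pm1$. When $\lambda<1$ neither pendant edge flips, so in both $G$ and $\bar{G}$ the leaves stay in the nodal domain of the root and the count is unchanged. When $\lambda>1$ both edges $(0,1)$ and $(0,-1)$ are removed: in $G$ the leaves become two isolated singleton domains, whereas in $\bar{G}$ the surviving edge $(1,-1)$ (not a flip, since $f(1)=f(-1)$) keeps them joined as a single domain, giving $\nu_{\bar{G}}(\f)=\nu_G(\f)-1$. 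Combining this surgery with the same index shift $[\lambda>1]-[\lambda>3]$ produces the three-case nodal formula.

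The main obstacle is the bookkeeping of this single reshuffled eigenvalue, and in particular making the off-by-one between $\f_n$ and $\f_{n+1}$ line up consistently with the flip and domain surgery. I would need to argue that the surgery is genuinely local: because the shared even eigenvector, and hence every flip edge away from $\{0,\pm1\}$, is identical in $G$ and $\bar{G}$, all nodal domains not meeting the leaves coincide, so only the local contribution changes. Finally I would verify that the boundary values $\lambda=1$ and $\lambda=3$ cause no trouble, precisely because they coincide with the non-generic odd eigenvalues ruled out by the genericity hypothesis.
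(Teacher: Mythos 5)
Your proposal is correct and takes essentially the same route as the paper: the even/odd decomposition from Lemma~\ref{Lem: Leaf insertion}, the identification of the single odd eigenvector $\vec{\gamma}$ whose eigenvalue shifts from $1$ to $3$, the resulting index shift of shared even eigenvectors exactly on the window $1<\lambda<3$, the observation that $f(1)=f(-1)$ makes the inserted edge flip-free, and the sign relation $f(0)/f(1)=1-\lambda$ governing the nodal count. Your spelled-out local surgery at the vertices $\{0,\pm 1\}$ is a slightly more explicit version of the paper's terser statement, but the argument is the same.
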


\begin{proof}
The statement (\ref{Leaf flip count}) follows almost immediately from Lemma \ref{Lem: Leaf insertion}. In this case our leaves are of length 1 and so the subspace $\calS_{-}$ is spanned entirely by $\vec{\gamma}$ (given by $\gamma(i) = \delta_{i,1} - \delta_{i,-1}$). Thus, since $\vec{\gamma}$ is a non generic eigenvector of $L(G)$ with eigenvalue 1 we have
\[
L(\bar{G})\vec{\gamma} = L(G)\vec{\gamma} + \vec{\gamma}\Iprod{\vec{\gamma}}{\vec{\gamma}} = 3\vec{\gamma}.
\]
The rank 1 perturbation $\vec{\gamma}$ induces a shift of one eigenvalue from 1 to 3, while all other eigenvalues remain the same. Therefore, by Lemma \ref{Lem: Leaf insertion} we have for all eigenvectors, except $\vec{\gamma}$,
\begin{equation}\label{Eqn: Eigenvector relation}
\bar{\f}_n = \left\{ \begin{array}{ll}
\f_{n+1} &  {\rm if} \;  1<\lambda_n<3 \\
\f_n & {\rm otherwise}
\end{array}\right.
\end{equation}
and so (\ref{Leaf flip count}) follows by the fact that $\Iprod{\vec{\gamma}}{\vec{\bar{f}}}=0 \Rightarrow \bar{f}(1) = \bar{f}(-1)$, i.e. $\bar{f}$ does not change sign across the inserted edge $(1,-1)$.

To establish (\ref{Leaf nodal count}) requires some more information, specifically the sign of $f(0)$ in comparison to $f(1) = f(-1)$. Evaluating the Laplacian of $G$ at the vertex $1$ we get
\[
[L(G)\vec{f}](1) = \lambda f(1) = f(1) - f(0) \Rightarrow (1- \lambda) = \frac{f(0)}{f(1)}.
\]
Hence, for $\lambda <1$, $\sgn(f(0)) = \sgn(f(1)) = \sgn(f(-1))$ and inserting an edge does not increase the number of nodal domains, whereas for $\lambda > 1$ the opposite is true. Combining this with (\ref{Eqn: Eigenvector relation}) leads to (\ref{Leaf nodal count}).
\end{proof}

As a consequence of Theorem \ref{thm: Edge insertion 2} we are able to get the following corollary, which enables us to generate pairs of non-isospectral graphs with the same flip count

\begin{cor}\label{Cor: Non_cospec}
Let $G_1$ and $G_2$ be graphs each containing a $k$-leaf-pair of length $k=1$ with eigenvectors $\f_n$ and $\g_n$ respectively. Suppose that for all generic eigenvectors we have $\mu_{G_1}(\f_n) = \mu_{G_2}(\g_n)$ and $\nu_{G_1}(\f_n) = \nu_{G_2}(\g_n)$ and they have the same number of eigenvalues in the ranges $0 \leq \lambda < 1$, $1 \leq\lambda < 3$ and $\lambda \geq 3$. Then, if $\bar{G}_1$ and $\bar{G}_2$ are the graphs obtained by inserting an edge between each leaf, then $\mu_{\bar{G}_1}(\bar{\f}_n) = \mu_{\bar{G}_2}(\bar{\g}_n)$ and $\nu_{\bar{G}_1}(\bar{\f}_n) = \nu_{\bar{G}_2}(\bar{\g}_n)$ for all generic $n$.
\end{cor}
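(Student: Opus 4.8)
The plan is to apply Theorem~\ref{thm: Edge insertion 2} separately to $G_1$ and to $G_2$, thereby rewriting each barred quantity $\mu_{\bar{G}_i}(\bar{\f}_n)$ and $\nu_{\bar{G}_i}(\bar{\f}_n)$ purely in terms of the \emph{unbarred} counts of $G_i$, and then to match the two graphs index by index using the hypotheses. The first observation I would make is that the right-hand sides of (\ref{Leaf flip count}) and (\ref{Leaf nodal count}) are governed by just two pieces of data: (a) which of the three ranges $0\le\lambda<1$, $1<\lambda<3$, $\lambda>3$ the eigenvalue associated with $\bar{\f}_n$ lies in, and (b) the unbarred flip/nodal count of either $\f_n$ or $\f_{n+1}$. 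Since (by the proof of Lemma~\ref{Lem: Leaf insertion}) the inserted edge leaves every generic eigenvector unchanged as a vector and merely shifts the single non-generic eigenvalue from $1$ to $3$, genericity transfers back and forth: $\bar{\f}_n$ is generic exactly when the corresponding $\f_n$ or $\f_{n+1}$ is, so the assumed equalities are available precisely at the indices the formulas demand.

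The key step is to show that for each generic $n$ the \emph{same} case of (\ref{Leaf flip count}) and (\ref{Leaf nodal count}) applies to both graphs. This is where the hypothesis on the eigenvalue counts enters. Since $G_1$ and $G_2$ share the same number of eigenvalues in each of $0\le\lambda<1$, $1\le\lambda<3$ and $\lambda\ge 3$, and the edge insertion performs the identical $1\mapsto 3$ shift in both, the equality of these counts is preserved for $\bar{G}_1$ and $\bar{G}_2$. Because the spectra are monotonically ordered, equality of the counts in each interval forces the eigenvalues of $\bar{\f}_n$ and $\bar{\g}_n$ to lie in the same interval for every $n$ (the boundary value $\lambda=1$ being the non-generic $\vec{\gamma}$ eigenvalue, which is excluded), so the branch selected in (\ref{Leaf flip count}) and (\ref{Leaf nodal count}) is common to both graphs.

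With the branch fixed, the conclusion follows by direct substitution. When the relevant eigenvalue lies in $(1,3)$ I get $\mu_{\bar{G}_1}(\bar{\f}_n)=\mu_{G_1}(\f_{n+1})$ and $\mu_{\bar{G}_2}(\bar{\g}_n)=\mu_{G_2}(\g_{n+1})$, which agree by $\mu_{G_1}(\f_{n+1})=\mu_{G_2}(\g_{n+1})$; the two remaining ranges invoke $\mu_{G_1}(\f_n)=\mu_{G_2}(\g_n)$ directly. The nodal identities are handled identically, the $-1$ offsets occurring in the same branch on both sides and hence cancelling, using $\nu_{G_1}(\f_{n+1})=\nu_{G_2}(\g_{n+1})$ in the middle range and $\nu_{G_1}(\f_n)=\nu_{G_2}(\g_n)$ in the outer ranges.

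I expect the main obstacle to be precisely the index-alignment carried out in the second step. A priori the quantity $\mu_{\bar{G}_i}(\bar{\f}_n)$ could be governed by different cases in the two graphs, and then the $n\leftrightarrow n+1$ shift in the middle range, together with the differing $-1$ offsets in the nodal formula, would destroy the match. Verifying that the equal-range-count hypothesis rules this out — and in particular that it is stable under the edge insertion — is the crux; once the branches are aligned, the statement reduces to the assumed equalities of the unbarred counts.
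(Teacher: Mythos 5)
Your proposal is correct and follows exactly the route the paper intends: the corollary is stated as an immediate consequence of Theorem~\ref{thm: Edge insertion 2}, and your branch-alignment argument (equal counts in $[0,1)$, $[1,3)$, $[3,\infty)$ are preserved under the common $1\mapsto 3$ shift, so the ordered spectra select the same case of (\ref{Leaf flip count}) and (\ref{Leaf nodal count}) at every index) is precisely the reasoning the paper leaves implicit. The only caveat, which the paper itself also glosses over, is the marginal case of an eigenvector that is generic in $\bar{G}_i$ with eigenvalue exactly $1$ (possible if $1$ is a degenerate eigenvalue of $G_i$), where genericity does not transfer back to $G_i$; this does not affect the substance of your argument.
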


\begin{figure}[t]
\centerline
{\includegraphics[width=0.9\textwidth]{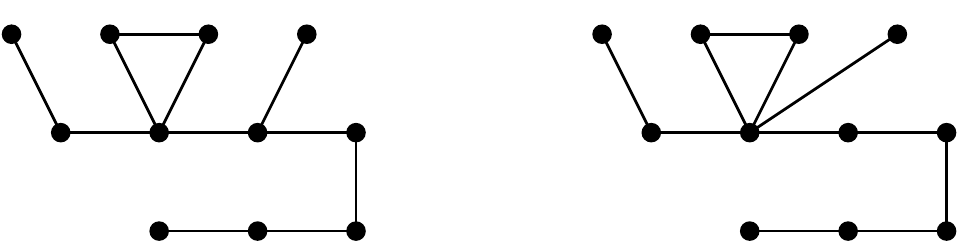}}
\vspace{10pt}
\centerline{$\sigma(\bar{G}_1) = \{\framebox{0},\framebox{0.13},\framebox{0.42},\framebox{0.61},1,$ \hspace{70pt} $\sigma(\bar{G}_2) = \{\framebox{0},\framebox{0.13},\framebox{0.48},\framebox{0.72},1,$}
\vspace{10pt}
\centerline{\small \hspace{40pt} $ \framebox{2.09},\framebox{2.39},3,\framebox{3.21},\framebox{3.81},\framebox{5.34}\}$ \hspace{80pt} $\framebox{1.67},\framebox{2.46},\framebox{2.80},3,\framebox{3.66},\framebox{6.10}\}$}
\vspace{5pt}
\caption{An example of two non-isospectral graphs in which all generic eigenvectors have the same flip count, given by $\mu_{\bar{G}_1} = \mu_{\bar{G}_2} =\{ 0   ,  1   ,  2  ,   3  ,      6  ,   7  ,      8  ,   9  ,  10 \}$ and nodal count $\nu_{\bar{G}_1} = \nu_{\bar{G}_2}= \{1   ,  2   ,  3 ,    4   ,  6   ,  7   ,  8   ,  9   , 10\}$. The spectrum is presented under each graph and eigenvalues associated to generic eigenvectors are marked in boxes.}
\label{Fig: Non_cospec}
\end{figure}

Corollary \ref{Cor: Non_cospec} states that, via this mechanism, we are required to find non-isospectral graphs $G_1$ and $G_2$ with the same flip count in order to generate another pair with of non-isospectral graphs $\bar{G}_1$ and $\bar{G}_2$ with the same flip count. At first sight this may seem like a pointless search, however it turn out to be very fruitful, as it allows us to go from tree-graphs (with trivial topologies) to non-tree-graphs, which no longer have trivial topologies. Examples of graphs $\bar{G}_1$ and $\bar{G}_2$ generated in this way are given in Figure \ref{Fig: Non_cospec}.

\section{Quantum graphs}\label{Sec: Quantum graphs}

\begin{dfn}In the case of a metric graph we define an \textbf{$l$-leaf} to be a leaf with edge length $l$. In addition, we define an \textbf{$l$-leaf-pair} to be two $l$-leaves connected via some root vertex.
\end{dfn}

\begin{remark}On a quantum graph, a vertex of degree two with Neumann vertex conditions (a so-called dummy vertex) does not alter any of the spectral properties \cite{Berk-Kuch-2013}. Therefore we are free to add such vertices on leaves at any position we desire (see e.g. Figure \ref{Fig: Q graph construct}).
\end{remark}

In the previous section we introduced the mechanism of inserting an edge between a pair of leaves. Our aim is to introduce a similar mechanism for quantum graphs, however the process of inserting an edge would increase the overall length of the graph, which could have dramatic consequences for the spectrum. The solution we find is thus to glue the leaves together at the same point, which alters the topology of the graph but retains the same overall length. This process leads us to Theorem \ref{Thm: Q graphs main thm} below, which is the counterpart to Theorem \ref{Thm: Leaves flip count} in Section \ref{Sec: Discrete graphs}. However, to begin, we start with the following lemma.

\begin{figure}[t]
\centerline
{\includegraphics[width=.9\linewidth]{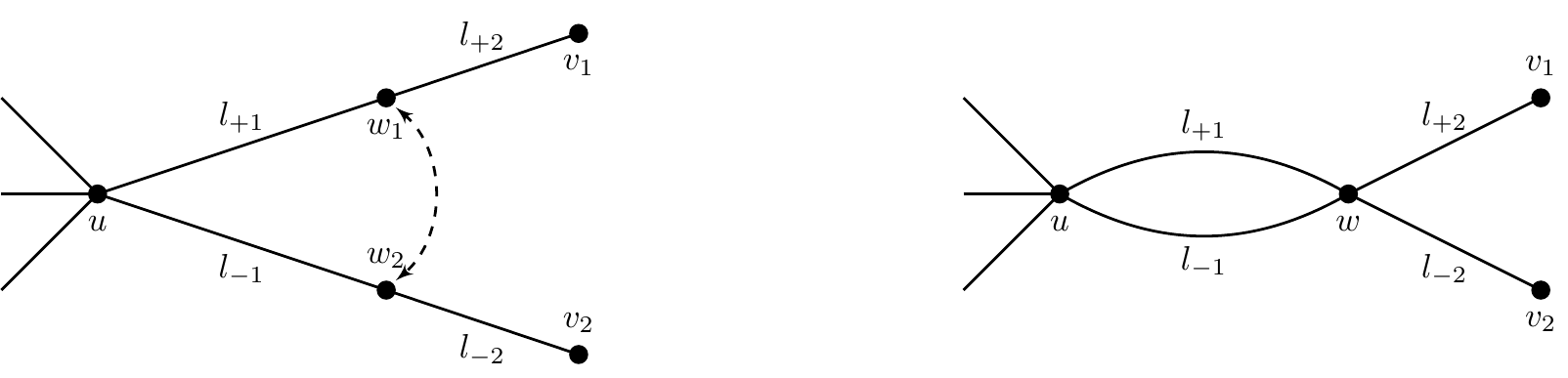}}
\caption{The process of gluing the leaves of a quantum graph $\Gamma$ together at the vertices $w_1$ and $w_2$ to form a new quantum graph $\bar{\Gamma}$ with vertex $w$ of degree 4}
\label{Fig: Q graph construct}
\end{figure}

\begin{lem}\label{Lem: Q graph perturbation} Let $\Gamma_1$ and $\Gamma_2$ be two graphs each containing a $l$-leaf-pair. Let $\bar{\Gamma}_1$ and $\bar{\Gamma}_2$ be the respective graphs obtained by gluing the leaves in each pair together at the same point. Then $\sigma(\Gamma_1) = \sigma(\Gamma_2) \iff \sigma(\bar{\Gamma}_1) = \sigma(\bar{\Gamma}_2) $.
\end{lem}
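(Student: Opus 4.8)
The plan is to mirror the argument of Lemma~\ref{Lem: Leaf insertion} in the quantum setting, replacing the rank-one matrix perturbation by a direct comparison of the relevant function spaces. The organising object is the reflection $\pi$ that exchanges the two leaves of the $l$-leaf-pair while fixing the root vertex $u$ and the remainder of the graph. Since $\pi$ is a length-preserving automorphism of both $\Gamma_1$ and $\bar{\Gamma}_1$ (on $\bar{\Gamma}_1$ it swaps the two edges running from $u$ to the glued vertex $w$, swaps the two edges running from $w$ to the tips, and fixes $u$ and $w$), it commutes with the respective Laplacians. Hence $\pi^2 = \mathrm{Id}$ lets me split $L^2(\Gamma_1) \cong \calS_+ \oplus \calS_-$ into even and odd subspaces $\calS_\pm = \{\psi : \pi\psi = \pm\psi\}$, and likewise for $\bar{\Gamma}_1,\Gamma_2,\bar{\Gamma}_2$, writing $\sigma = \sigma^+ \cup \sigma^-$ (counting multiplicities). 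The statement then follows once I establish two facts: that the odd spectra are forced by the leaf geometry, and that gluing leaves the even spectrum untouched.

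First I would treat the odd subspace, which is the easy half. An odd eigenfunction vanishes on the fixed-point set of $\pi$; combined with the continuity condition~(\ref{Eqn: Neumann cond 1}) this forces it to vanish at the root $u$ and to be identically zero off the two leaves (on $\bar{\Gamma}_1$ it must in addition vanish at the glued vertex $w$, as $w$ is fixed by $\pi$). One checks that the derivative condition~(\ref{Eqn: Neumann cond 2}) at $u$ and at $w$ is then satisfied automatically by antisymmetry, so the problem decouples into segments carrying fixed boundary data (Dirichlet at $u$ and at $w$, Neumann at the tips) whose lengths depend only on $l$ and the chosen gluing point. Consequently $\sigma^-(\Gamma_1)=\sigma^-(\Gamma_2)$ and $\sigma^-(\bar{\Gamma}_1)=\sigma^-(\bar{\Gamma}_2)$ hold unconditionally, exactly as $\sigma^-(G_1)=\sigma^-(G_2)$ did in the discrete lemma. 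Note that, as there, $\sigma^-(\Gamma_1)\neq\sigma^-(\bar{\Gamma}_1)$ in general — gluing genuinely changes the odd spectrum — but this is irrelevant to the claim.

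The crux, where I expect the real work to lie, is the even subspace: I must show $\sigma^+(\Gamma_1)=\sigma^+(\bar{\Gamma}_1)$ (and similarly for index $2$). I would place a dummy (degree-two Neumann) vertex at the gluing point on each leaf, so that before gluing these are ordinary pass-through points and after gluing they are identified into the degree-four vertex $w$. For a symmetric function the four edge-values at $w$ are pairwise equal — one common value on the two $u$-side edges and one on the two tip-side edges — so the degree-four continuity condition collapses to the single equality that is precisely continuity across each dummy vertex in $\Gamma_1$. Likewise the four outgoing derivatives at $w$ pair up, and condition~(\ref{Eqn: Neumann cond 2}) becomes twice the two-term balance at each dummy vertex, i.e. exactly smoothness across the pass-through points of $\Gamma_1$. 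Thus restriction to $\calS_+$ gives a length-preserving (hence unitary, by the inner product~(\ref{Eq: Inner product})) identification of the two domains that intertwines the Laplacians, so the even operators are unitarily equivalent and $\sigma^+(\Gamma_1)=\sigma^+(\bar{\Gamma}_1)$ with multiplicities. The main care required here is to verify this reduction of the vertex conditions edge-by-edge and to confirm surjectivity, namely that every even eigenfunction of $\bar{\Gamma}_1$ descends to one of $\Gamma_1$ in this way.

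Finally I would assemble the chain of equivalences, using that $\sigma^-$ is forced and that gluing fixes $\sigma^+$:
\[
\sigma(\Gamma_1)=\sigma(\Gamma_2) \iff \sigma^+(\Gamma_1)=\sigma^+(\Gamma_2) \iff \sigma^+(\bar{\Gamma}_1)=\sigma^+(\bar{\Gamma}_2) \iff \sigma(\bar{\Gamma}_1)=\sigma(\bar{\Gamma}_2),
\]
which is the desired statement.
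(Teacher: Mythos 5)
Your proposal is correct and follows essentially the same route as the paper: the leaf-exchange symmetry $\pi$, the even/odd decomposition of $L^2$, the observation that odd eigenfunctions are supported on the leaves (so the odd spectra agree unconditionally, before and after gluing), and the verification that for even functions the Neumann conditions at the degree-four vertex $w$ collapse to the pass-through conditions at the dummy vertices, giving $\sigma^+(\Gamma_i)=\sigma^+(\bar{\Gamma}_i)$. If anything, you are slightly more careful than the paper in flagging the surjectivity step (that even eigenfunctions of $\bar{\Gamma}_1$ descend to $\Gamma_1$), which the paper leaves implicit.
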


\begin{proof}The proof proceeds in analogous manner to the discrete case given in Lemma \ref{Lem: Leaf insertion}. Let us first administer some notation. In $\Gamma_1$ let $u$ denote the root vertex which joins the two leaves, with $v_{\pm}$ the ends of these leaves. Let us introduce two dummy vertices $w_{\pm}$ the same distance away from the root on these respective leaves. Thus we have four edges which we denote $e_{\pm 1} = (w_{\pm},u)$ and $e_{\pm 2} = (w_{\pm},v_{\pm})$, with lengths $l_{e_{+1}} = l_{e_{-1}} = l_1$ and $l_{e_{+2}} = l_{e_{-2}} = l_2$, as illustrated in Figure \ref{Fig: Q graph construct}. Thus our edge set consists of the edges $\E = \{e_{-2},e_{-1},e_{1},e_2,e_3,\ldots,e_{|\E| - 2}\}$ and $\E_{\rm leaves} := \{e_{-2},e_{-1},e_{1},e_2\}$. Note that here $|\E|$ is the number of edges after we have inserted the dummy vertices $w_{\pm}$.

We now introduce the operator $\pi : L^2(\Gamma_1) \to L^2(\Gamma_1)$ that interchanges the two edges, i.e. for functions $\psi \in L^2(\Gamma_1)$ we have
\begin{equation}\label{Eqn: QG Symmetry action}
[\pi \psi](x_{e_i}) = \Bigg\{ \begin{array}{cl}
\psi(x_{e_{-i}}) & i = -2,-1,1,2 \\
\psi(x_{e_{i}}) & i = 3,\ldots, |\E| - 2.
\end{array}
\end{equation}
Again, since $\pi^2 = I$, the identity, we can decompose our Hilbert space into those functions which are even $(+)$ and odd $(-)$ under exchange of the two leaves
\begin{equation}\label{Eqn: Hilbert space decomposition}
L^2_{\pm}(\Gamma_1) : = \{ \psi \in L^2(\Gamma_1) : \pi \psi = \pm \psi \},
\end{equation}
with $L^2(\Gamma_1) = L^2_+(\Gamma_1) \oplus L^2_-(\Gamma_1)$. Moreover, if $\psi \in L^2_+(\Gamma_1)$ and is an eigenvector of $L_{\Gamma_1}$ with eigenvalue $\lambda$ then it is also an eigenvector of $L_{\bar{\Gamma}_1}$ with the same eigenvalue. To see this note that if $L_{\Gamma_1} \psi = \lambda\psi$  then $-d^2 \psi (x_e) /d x_e^2 = \lambda \psi(x_e)$ on all edges $e \in \E(\bar{\Gamma}_1)$ and thus it remains to check the vertex conditions at $w_{\pm}$ are satisfied.

\begin{enumerate}
\item \textbf{Continuity:} By the vertex continuity at $w_{\pm}$ on $\Gamma_1$ and the fact that $\psi$ is even we have $\psi_{e_1}(0) = \psi_{e_2}(0) = \psi_{e_{-2}}(0) = \psi_{e_{-1}}(0)$, which is precisely the condition required at $w$ on $\bar{\Gamma}_1$.
\item \textbf{Derivatives:} On $\Gamma_1$ the derivatives at $w_{\pm}$ satisfy $\psi'_{e_1}(0) +  \psi'_{e_2}(0) = 0$ and $\psi'_{e_{-1}}(0) + \psi'_{e_{-2}}(0) = 0$ respectively. Adding these relations together gives $\psi'_{e_1}(0) + \psi'_{e_2}(0) + \psi'_{e_{-1}}(0) + \psi'_{e_{-2}}(0) = 0$, which again is the condition required for the derivatives of $\psi$ at $w$ on $\bar{\Gamma}_1$.
\end{enumerate}
In addition, since $\pi$ preserves the eigenspaces of $\Gamma_1$ we can choose a basis of eigenfunctions $\psi_n$ such that they are either even or odd and we denote the corresponding spectra (including multiplicities) by $\sigma^{\pm}(\Gamma_1)$, so that $\sigma(\Gamma_1) = \sigma^+(\Gamma_1) \cup \sigma^-(\Gamma_1)$ and similarly for $\bar{\Gamma}_1$.

The same result hold for $\Gamma_2$ and $\bar{\Gamma}_2$. Hence, since $\sigma^+(\Gamma_1) = \sigma^+(\bar{\Gamma}_1)$ and $\sigma^+(\Gamma_2) = \sigma^+(\bar{\Gamma}_2)$, we have
\[
\sigma^+(\Gamma_1) = \sigma^+(\Gamma_2) \iff \sigma^+(\bar{\Gamma}_1) = \sigma^+(\bar{\Gamma}_2).
\]
It thus remains to show the same for the odd subspaces. We note that if $\psi \in L_{-}(\Gamma_1)$ then (\ref{Eqn: QG Symmetry action}) and (\ref{Eqn: Hilbert space decomposition}) imply that $\psi(x) = 0$ for all points $x$ which are not on the leaves of $\Gamma_1$ and similarly for $\Gamma_2$. Therefore, $\psi \in L_{-}(\Gamma_1)$ is an eigenfunction of $\Gamma_1$ with eigenvalue $\lambda$ if and only if it is an eigenfunction of $\Gamma_2$ with the same eigenvalue and so $\sigma^-(\Gamma_1) = \sigma^-(\Gamma_2)$. The same reasoning implies that $\sigma^-(\bar{\Gamma}_1) = \sigma^-(\bar{\Gamma}_2)$, which completes the result.
\end{proof}

\begin{thm}\label{Thm: Q graphs main thm}Let $\Gamma_1$ and $\Gamma_2$ be two graphs each containing an $l$-leaf-pair and satisfying $\sigma(\Gamma_1) = \sigma(\Gamma_2)$. Let us denote the respective eigenfunctions by $\psi_n$ and $\phi_n$ and suppose that $\mu_{\Gamma_1}(\psi_n) = \mu_{\Gamma_1}(\phi_n)$ and $\nu_{\Gamma_1}(\psi_n)=\nu_{\Gamma_2}(\phi_n)$ for all generic $n$. Let also $\bar{\Gamma}_1$ and $\bar{\Gamma}_2$ be the respective graphs obtained by gluing the leaves together at points $w_{\pm}$ in the fashion of Lemma \ref{Lem: Q graph perturbation} and $\bar{\psi}_n$ and $\bar{\phi}_n$ be the respective eigenfunctions. Then 
\begin{enumerate}
\item [(i)]$\sigma(\bar{\Gamma}_1) = \sigma(\bar{\Gamma}_2)$ (the graphs are isospectral).
\item [(ii)]For all generic eigenvectors $\mu_{\bar{\Gamma}_1}(\bar{\psi}_n) = \mu_{\bar{\Gamma}_2}(\bar{\phi}_n)$ (they have the same flip count).
\item [(iii)]For all generic eigenvectors $\nu_{\bar{\Gamma}_1}(\bar{\psi}_n) = \nu_{\bar{\Gamma}_2}(\bar{\phi}_n)$ (they have the same nodal count).
\end{enumerate}
\end{thm}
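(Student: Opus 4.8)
The plan is to follow the template of the discrete Theorem \ref{Thm: Leaves flip count}, transporting each of its three ingredients to the metric setting through Lemma \ref{Lem: Q graph perturbation}. Part (i) is nothing more than the statement of that lemma, so I would simply cite it. The substance lies in parts (ii) and (iii), and both hinge on the observation that every generic eigenfunction must be \emph{even}: from the proof of Lemma \ref{Lem: Q graph perturbation} an odd eigenfunction $\psi \in L^2_-(\Gamma_1)$ vanishes at every point off the leaves, in particular at the root $u$, whereas genericity demands $\psi$ be non-zero on all vertices. Hence the generic eigenfunctions lie in $L^2_+(\Gamma_1)$ and, by the lemma, remain eigenfunctions of $\bar{\Gamma}_1$ with the same eigenvalue.

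For part (ii) I would then note that the gluing operation does not alter the edge set; it only changes how the paired edges $e_{\pm 1}$ and $e_{\pm 2}$ are attached at $w_{\pm}$. An even eigenfunction takes identical values on the two edges in each pair, so the restriction of $\psi$ to every edge is unchanged, and with it the number of interior zeros. This gives $\mu_{\Gamma_1}(\psi) = \mu_{\bar{\Gamma}_1}(\psi)$, and likewise for $\Gamma_2$. Invoking the isospectrality of part (i) to align the positions in the ordered spectra, so that an eigenfunction sitting at position $m$ in $\Gamma_1,\Gamma_2$ sits at the common position $n$ in $\bar{\Gamma}_1,\bar{\Gamma}_2$, then yields $\mu_{\bar{\Gamma}_1}(\bar{\psi}_n) = \mu_{\Gamma_1}(\psi_m) = \mu_{\Gamma_2}(\phi_m) = \mu_{\bar{\Gamma}_2}(\bar{\phi}_n)$, exactly as in the discrete proof.

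For part (iii) I would exploit the rigidity of the eigenfunction on a leaf. Solving $-\psi'' = \lambda \psi$ along a leaf subject to the Neumann condition $\psi' = 0$ at the degree-one endpoint $v_{\pm}$ fixes $\psi$ on the whole leaf up to a single scalar, so that $\psi(x) = F(x;\lambda)\,\psi(u)$ for a function $F$ depending only on $\lambda$ and the fixed lengths $l_1,l_2$. Consequently the positions of the zeros along each leaf, and in particular whether the segment $e_{\pm 1}$ joining $w_{\pm}$ to the root carries a zero, are functions of $\lambda$ alone. Since $w_+$ and $w_-$ are connected within $\Gamma_1$ only through $u$, they lie in the same nodal domain precisely when $e_{\pm 1}$ carries no zero and in distinct domains otherwise, so identifying them changes the nodal count by $\chi(\lambda) \in \{0,-1\}$, a quantity depending only on $\lambda$ and the common leaf geometry. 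Writing $\nu_{\bar{\Gamma}_1}(\psi) = \nu_{\Gamma_1}(\psi) + \chi(\lambda)$ and the analogous identity for $\Gamma_2$, the hypothesis $\nu_{\Gamma_1}(\psi_m) = \nu_{\Gamma_2}(\phi_m)$ together with the spectral alignment of part (i) closes the argument.

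The delicate step is the nodal-count bookkeeping in part (iii). Unlike the discrete insertion, welding $w_+$ to $w_-$ turns the parallel edges $e_{\pm 1}$ into a genuine cycle, so I must check that this topological change only ever \emph{merges} two domains and never creates or splits one. The resolution is again evenness: because $\psi$ takes matching values on $e_{+1}$ and $e_{-1}$, no new zero is introduced at the weld $w$ and the identification is a pure vertex-merge, which can reduce the component count by at most one. This is precisely what pins $\chi$ to $\{0,-1\}$ and, crucially, makes it a function of $\lambda$ alone, so that it cancels between the two isospectral families.
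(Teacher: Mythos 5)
Your proposal is correct and follows essentially the same route as the paper's proof: part (i) cites the gluing lemma, part (ii) uses the fact that generic eigenfunctions are even and that gluing introduces no new zeros before aligning spectral positions, and part (iii) uses the Neumann condition at the free leaf end to pin the leaf restriction to a multiple of $\cos(\sqrt{\lambda}x)$, so the change in nodal count is a function of $\lambda$ alone and cancels between the two graphs. Your explicit bookkeeping of why the weld changes the count by $0$ or $-1$ is somewhat more careful than the paper's (which phrases the same point as $\psi(x)/\phi(x)$ being constant on the leaves), but it is the same argument.
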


\begin{remark}Depending on the choice of $w_{\pm}$ it may be the case that eigenfunctions that are generic in $\Gamma_1$ (without the dummy vertices inserted) are no longer generic in $\bar{\Gamma}_1$, since they may be zero at $w_{\pm}$. However the combined zero set of all generic eigenfunctions forms a countable sequence of points on the edges of the leaves. Therefore, as this set is of zero measure, we can choose $w_{\pm}$ almost everywhere on the leaves such that all generic eigenfunctions of $\Gamma_1$ are also generic on $\bar{\Gamma}_1$. The same holds for $\Gamma_2$ and $\bar{\Gamma}_2$.
\end{remark}

\begin{proof}[Proof of Theorem \ref{Thm: Q graphs main thm}] Part (i) is simply restating Lemma \ref{Lem: Q graph perturbation}.

For Part (ii) we note that if $\psi$ is a generic eigenfunction of $\Gamma_1$ with eigenvalue $\lambda$, then $\psi \in L^2_+(\Gamma_1)$ and hence, by the argument in the proof of Lemma \ref{Lem: Q graph perturbation} it is an eigenfunction of $\bar{\Gamma}_1$ with the same eigenvalue. If $\psi$ is the $m^{{\rm th}}$ eigenfunction $\psi_m$ of $\Gamma_1$ and the $n^{{\rm th}}$ eigenfunction $\bar{\psi}_n$ of $\bar{\Gamma}_1$ then $\mu_{\Gamma_1}(\psi_m) = \mu_{\bar{\Gamma}_1}(\bar{\psi}_n)$, since the process of gluing the edges together does not induce any more zeros. For the same reason $\mu_{\Gamma_2}(\phi_r) = \mu_{\bar{\Gamma}_2}(\bar{\phi}_n)$ when $\phi$ is generic and the $n^{{\rm th}}$ (resp. $r^{{\rm th}}$) eigenfunction of $\Gamma_2$ (resp. $\bar{\Gamma}_2$). Thus Lemma \ref{Lem: Q graph perturbation} ensures that, since $\bar{\Gamma}_1$ and $\bar{\Gamma}_2$ are isospectral, we have $m = r$ and thus $\mu_{\bar{\Gamma}_1}(\bar{\psi}_n) =  \mu_{\bar{\Gamma}_2}(\bar{\phi}_n)$ for all generic $n$.

For Part (iii) we need to deduce some more information about the eigenfunctions. Firstly, if $\psi$ is an eigenfunction of $\Gamma_1$ we know that on each edge it must be of the form
\[
\psi(x) = A_1\sin(\sqrt{\lambda} x) + B_1\cos(\sqrt{\lambda} x).
\]
Now, let $x \in [0,l]$ denote the position on one of the leaves with the point $x=0$ giving the position at the end of the leaf and $x=l$ the point of the root vertex $u$. Since all the vertices obey Neumann conditions we find that
\[
\frac{d \psi}{dx}(0) = A_1\sqrt{\lambda}\cos(0) - B_1\sqrt{\lambda}\sin(0) = A_1\sqrt{\lambda} = 0.
\]
Thus $\psi(x) = B_1\cos(\sqrt{\lambda}x)$. For the same reasoning we have on the graph $\Gamma_2$ $\phi(x) = B_2\cos(\sqrt{\lambda}x)$, for some other constant $B_2$. Gluing the edges together at the points $w_{\pm}$ we either have that the number of nodal domains is the same $\nu_{\Gamma_1}(\psi) = \nu_{\bar{\Gamma}_1}(\psi)$ (the points $w_{\pm}$ belong to the same nodal domain of $\psi$ on $\Gamma_1$) or it decreases by one, i.e. $\nu_{\bar{\Gamma}_1}(\psi)= \nu_{\Gamma_1}(\psi) -1$ (the points $w_{\pm}$ belong to different nodal domains of $\psi$ on $\Gamma_1$). However the same must be true for $\phi$ since $\psi(x)/\phi(x)$ is constant for all $x \in \E_{\rm leaves}$. Therefore $\nu_{\bar{\Gamma}_1}(\bar{\psi}_n) =  \nu_{\bar{\Gamma}_2}(\bar{\phi}_n)$ and the isospectrality of Lemma \ref{Lem: Q graph perturbation} ensures they must occupy the same position in the spectrum.
\end{proof}

We also comment that one could, in principle, obtain a pair of non-isospectral quantum graphs with the same flip and nodal counts in an analogous manner to Section \ref{Sec: Removing isospectrality}. The process would involve taking two tree graphs $\Gamma_1$ and $\Gamma_2$, both containing an $l$-leaf-pair, and then gluing together the leaves to obtain $\bar{\Gamma}_1$ and $\bar{\Gamma}_2$, as outlined above. $\Gamma_1$ will have the same eigenvalues in the odd part of the spectrum as $\Gamma_2$ (i.e. $\lambda^{{\rm odd}}_n(\Gamma_1) = \lambda^{{\rm odd}}_n(\Gamma_2)$) and, similarly, $\bar{\Gamma}_1$ the same (odd) eigenvalues as $\bar{\Gamma}_2$ (i.e. $\lambda^{{\rm odd}}_n(\bar{\Gamma}_1) = \lambda^{{\rm odd}}_n(\bar{\Gamma}_2)$). 

These odd eigenvalues can be explicitly calculated. They correspond to the eigenvalues of a line graph with Dirichlet conditions at one end, Neumann at the other and Neumann (resp. Dirichlet) for $\Gamma$ (resp. $\bar{\Gamma}$) at a central vertex a distance $l_1$ from the end with Dirichlet conditions. This means the associated eigenvalues are strictly interlacing, i.e. $\lambda^{{\rm odd}}_n(\Gamma_1) < \lambda^{{\rm odd}}_n(\bar{\Gamma}_1) < \lambda^{{\rm odd}}_{n+1}(\Gamma_1)$ (see e.g. Theorem 3.1.8 in \cite{Berk-Kuch-2013}).

Since $\Gamma_1$ and $\Gamma_2$ are trees they have the same flip and nodal counts. The even eigenvalues of $\Gamma_1$ are the same as $\bar{\Gamma}_1$ (similarly for $\Gamma_2$ and $\bar{\Gamma}_2$). Thus, if the positions in the spectrum (not necessarily the values) of these eigenvalues are the same - i.e. $\Gamma_1$ and $\Gamma_2$ have the same number of \emph{even} eigenvalues in each interval $(\lambda^{{\rm odd}}_1,\bar{\lambda}^{{\rm odd}}_1)$, $(\bar{\lambda}^{{\rm odd}}_1,\lambda^{{\rm odd}}_2)$ etc. - then the flip counts (although not necessarily the nodal counts) of $\bar{\Gamma}_1$ and $\bar{\Gamma}_2$ will coincide. However, due to the fact that there are an infinite number of eigenvalues for quantum graphs, this cannot be checked numerically. Thus, at present, we are unsure about the existence of such a pair of quantum graphs.

\section{Conclusions}\label{Sec: Conclusions}
The work presented here outlines a simple mechanism for constructing both discrete and quantum isospectral graphs with the same flip and nodal count. Thus presenting a family of counter-examples to the conjecture that pairs of isospectral (non-tree) graphs may be distinguished by their nodal properties  \cite{band_nodal_2006,Band-2007,oren_isospectral_2012}. In addition, we have also shown that certain non-isospectral graphs share the same flip count and nodal count and thus one cannot deduce that two graphs are non-isospectral simply by knowing the nodal properties.

We believe the work highlights a number of previous questions and raises yet more. For instance, what is the likelihood of finding isospectral graphs and quantum graphs with the same flip/nodal count? - Numerical simulations by I. Oren suggest these may be extremely rare in the cases such as random regular graphs \cite{idan_oren_private}. With our methods, we can create graphs with very different topology that are isospectral and have identical flip counts and nodal counts. What properties unite all of these examples? and is there a criterion we can point to that tells us whether isospectral graphs and quantum graphs will have the same flip/nodal count or are these examples all coincidental? Perhaps recent advances in our understanding of nodal and flip counts in connection with the stability of eigenvalues \cite{berkolaiko_stability_2012,Band-2012,berkolaiko_nodal_2013,colin_de_verdiere_magnetic_2013} may provide answers, as this makes connections with topological properties of the graph. Progress in this respect would be greatly applicable to graph-theoretical problems looking at distinguishing non-isomorphic graphs and may also indicate how to proceed in the case of bounded domains.

\subsection*{Acknowledgements} We would like to extend our gratitudes to K. Ammann for discussions regarding her unpublished work, to U. Smilansky for his guidance and support during this project and to L. Alon and R. Band for helpful comments on a preliminary form of this manuscript. We are also grateful for the financial help of the Feinberg Graduate School at the Weizmann Institute of Science, where much of this work was carried out. Finally, CHJ would like to acknowledge the Leverhulme Trust (ECF-2014-448) and JSJ would like to acknowledge the Danish Council for Independent Research for financial support.

\section*{References}

\bibliographystyle{ieeetr}
\bibliography{References}

\end{document}